\def\theequation{\arabic{section}.\arabic{equation}}
\newcommand{\be}{\begin{equation}}
\newcommand{\en}{\end{equation}}
\newcommand{\bea}{\begin{eqnarray}}
\newcommand{\ena}{\end{eqnarray}}
\newcommand{\beano}{\begin{eqnarray*}}
\newcommand{\enano}{\end{eqnarray*}}
\newcommand{\bee}{\begin{enumerate}}
\newcommand{\ene}{\end{enumerate}}
\newcommand{\mc}{\mathcal}
\newcommand{\D}{{\mc D}}
\newcommand{\LL}{\mc L}
\newcommand{\Sc}{{\cal S}}
\newcommand{\F}{{\cal F}}
\newcommand{\Lc}{{\cal L}}
\newcommand{\1}{1 \!\! 1}
\newcommand{\LD}{{\LL}^\dagger (\D)}
\newcommand{\Hil}{\mc H}
\newtheorem{thm}{Theorem}
\newtheorem{defn}[thm]{Definition}
\newenvironment{proof}{\noindent {\bf Proof --}}{\hfill$\square$ \vspace{3mm}\endtrivlist}
\begin{document}

\thispagestyle{empty}

\vspace*{2cm}

\begin{center}
{\Large \bf Abstract ladder operators and their applications}   \vspace{2cm}\\

{\large F. Bagarello}\\
  Dipartimento di Ingegneria,
Universit\`a di Palermo,\\ I-90128  Palermo, Italy\\
and I.N.F.N., Sezione di Napoli\\
e-mail: fabio.bagarello@unipa.it\\
home page: www1.unipa.it/fabio.bagarello

\end{center}

\vspace*{2cm}

\begin{abstract}
\noindent We consider a rather general version of ladder operator $Z$ used by some authors in few recent papers, $[H_0,Z]=\lambda Z$ for some $\lambda\in\mathbb{R}$, $H_0=H_0^\dagger$, and we show that several interesting results can be deduced from this formula. Then we extend it in two ways: first we replace the original equality with formula $[H_0,Z]=\lambda Z[Z^\dagger, Z]$, and secondly we consider $[H,Z]=\lambda Z$ for some $\lambda\in\mathbb{C}$, $H\neq H^\dagger$. In both cases many applications are discussed. In particular we consider factorizable Hamiltonians and Hamiltonians written in terms of operators satisfying the generalized Heisenberg algebra or the $\D$ pseudo-bosonic commutation relations. 
\end{abstract}

\vspace{2cm}


\vfill


\newpage

\section{Introduction}

The problem of finding the eigenvalues and the eigenvectors of a given Hamiltonian is, quite often, the first difficulty one  meets when analyzing some specific quantum system. Apart from the simple examples one can find in the textbooks, the possibility of solving this problem usually decreases fast when the system becomes more complicated. Durng the years, many techniques have been proposed and adopted to produce new solvable Hamiltonians, from the use of interwining operators, \cite{intop1,intop2,intop3}, to the so-called supersymmetric quantum mechanics, \cite{CKS,jun}, just to cite two exciting strategies. Another well-known possibility makes use of ladder operators, which are very old objects  appearing already in the analysis of the harmonic oscillator or, in  second quantization and in elementary particles, to deal with bosons and with fermions. There are thousands of books and papers dealing with bosonic and with fermionic operators, and we only refer to \cite{roman}. Ladder operators also exist in other contexts, like in many models driven by non self-adjoint Hamiltonians, see \cite{mosta,benbook,baginbagbook}, just to cite a few. The key aspect, in this case, is that the lowering and the raising operators are not one the adjoint of the other.

Some years ago, Fernandez started to set up what he  called an {\em algebraic treatment} of different quadratic Hamiltonians, not necessarily self-adjoint, \cite{fern1,fern2,fern3}. The main idea is that, given some Hamiltonian $H$, if one can find an operator $Z$ such that $[H,Z]=\lambda Z$, many interesting results can be deduced. In particular, $Z$ turns out to be a ladder operator, meaning with this that the action of powers of $Z$ on some {\em seed} eigenstate of $H$, $\hat\varphi$, can produce other eigenvectors of $H$, corresponding to different eigenvalues. More important, if $\hat\varphi$ is chosen in a proper way, we can find all the eigenvectors of $H$. Fernandez's main interest was (and still is) in concrete applications of these ladder operators. In this paper, other than considering specific quantum systems, we will also propose an abstract settings for operators of this kind, and we will extend them to other relevant situations which are interesting in quantum mechanics. In particular, we will concentrate on the existence of operators commuting with the Hamiltonian. Because of the general analysis considered in this paper, we will call all operators $Z$ satisfying the above commutation rule, or their generalized versions proposed later,  {\em abstract ladder operators}, ALOs.

The paper is organized as follows: in Section \ref{sect2} we begin our analysis assuming that an ALO  $Z$ satisfies  $[H_0,Z]=\lambda Z$, for some number $\lambda$. Here $H_0$ is a self-adjoint Hamiltonian. The results we will deduce are in line with those deduced by Fernandez in his papers, but considered here at a somehow more abstract level. In Section  \ref{sect3} we replace condition $[H_0,Z]=\lambda Z$ with $[H_0,Z]=\lambda\, Z [Z^\dagger,Z]$, which turns ot to be a good choice to enlarge significantly the class of physical systems to which our strategy applies. In Section \ref{sect4} we concentrate on a different situation, i.e. to the case of non self-adjoint Hamiltonians. We also assume that several such ALOs could exist, so that we consider the commutation rule $[H,Z_j]=\lambda_j Z_j$, $j\geq1$, $H\neq H^\dagger$, in general. Many examples are given in terms of pseudo-bosons, \cite{baginbagbook,bagrev}. Our conclusions are discussed in Section \ref{sectconcl}. To make the paper more readable, we also include a short appendix which contains few facts on the algebraic settings used almost everywhere in the following, which is relevant since most of the times the operators considered in our analysis are unbounded.

\section{ALOs for $H_0=H_0^\dagger$, pt. 1}\label{sect2}

Let $H_0$ be a self-adjoint operator acting on an Hilbert space $\Hil$, with scalar product $\langle.,.\rangle$ and related norm $\|.\|=\sqrt{\langle.,.\rangle}$. Let $Z$ be a second operator on $\Hil$ satisfying the following  equality:
\be
[H_0,Z]=\lambda Z,
\label{21}\en
for some $\lambda\in\mathbb{C}$. This equality can be satisfied by bounded or by unbounded operators. For instance, let $c$ be a lowering fermionic operator: $\{c,c^\dagger\}=cc^\dagger+c^\dagger c=\1_f$, where $\1_f$ is the identity operator in the fermionic Hilbert space $\Hil_f=\mathbb{C}^2$, and $c^2=0$. Now, if $H_0=\omega c^\dagger c$, $\omega\in\mathbb{R}$, it follows that $[H_0,c]=-\omega c$ which is exactly equation (\ref{21}) upon identifying $\lambda=-\omega$ and $Z=c$. Analogously, let $a$ be a lowering bosonic operator: $[a,a^\dagger]=\1_b$, where $\1_b$ is the identity operator in the bosonic Hilbert space $\Hil_b$, which is well known to be necessarily infinite dimensional. In particular, depending on the representation we adopt for $a$, we have $\Hil_b=\Lc^2(\mathbb{R})$ or $\Hil_b=l^2(\mathbb{N})$. Now, if $H_0=\omega a^\dagger a$, $\omega\in\mathbb{R}$, it follows that $[H_0,a]=-\omega a$ which is again equation (\ref{21}) with the identification $\lambda=-\omega$, $Z=a$. We will consider other possibilities later in the paper. When the operators involved in (\ref{21}) are bounded, as in the fermionic example, the natural operatorial settings to work with is the algebra $B(\Hil_f)$ of the bounded operators on $\Hil_f$. On the other hand, if $H_0$ or $Z$, or both, are unbounded, as for bosons, $B(\Hil_b)$  is not a good choice, since domain problems may quite easily appear. This problem is well known, and it is discussed in many papers and textbooks, where alternative algebraic frameworks are proposed to deal with certain unbounded operators. We refer to \cite{aitbook,bagrev2007,schu,trrev} and references therein for a detailed analysis of this aspect. We also refer to the Appendix, were a very coincise introduction to the algebra $\Lc^\dagger(\D)$ is given. Here it is sufficient to say that $\D$ is some suitable dense subset of $\Hil_b$ (for bosons), and $\Lc^\dagger(\D)$ is a $O^*$-algebra, see Appendix. In particular, since both $B(\Hil_b)$, or $B(\Hil)$ more in general, and $\Lc^\dagger(\D)$ are *-algebras, we can safely multiply their elements, take their adjoints, compute commutators, powers, and so on, without leaving the algebra, which is the relevant aspect for us.

For this reason, from now on we will always assume that the operators involved in our construction belong either to  $B(\Hil)$ or to $\Lc^\dagger(\D)$, for some given, or properly chosen, $\D$. We will say more on this aspect of our analysis later on.

Going back to (\ref{21}) it is first easy to check that this is equivalent to 
\be
[H_0,Z^\dagger]=-\overline{\lambda}\, Z^\dagger.
\label{22}\en
Moreover, a simple induction argument shows also that
\be
[H_0,Z^n]=n\lambda Z^n, \qquad [H_0,{Z^\dagger}^n]=-n\overline{\lambda}\, {Z^\dagger}^n, 
\label{23}\en
$n=0,1,2,\ldots$. From (\ref{21}) and (\ref{22}) it follows that
\be
[H_0,Z^\dagger Z]=(\lambda-\overline{\lambda})\, Z^\dagger Z, \qquad [H_0,ZZ^\dagger]=(\lambda-\overline{\lambda})\, ZZ^\dagger,
\label{24}\en
which, of course, implies that $[H_0,Z^\dagger Z]=[H_0,ZZ^\dagger]=0$ if $\lambda$ in (\ref{21}) is real. Hence, if $\lambda\in\mathbb{R}$, we also find that $[H_0,[Z,Z^\dagger]]=0$. 

The reality of $\lambda$ is a natural requirement, due to the fact that $H_0$ is self-adjoint. This is evident from what follows: let us assume that a nonzero eigenvector of $H_0$ exists, $\Phi_E\in\Hil$, such that \be H_0\Phi_E=E\Phi_E.\label{25}\en Of course, since $H_0=H_0^\dagger$, $E\in\mathbb{R}$. We define the following vectors
\be
\Phi_{E,n}^\downarrow:=(Z^\dagger)^n\Phi_E,\qquad  \Phi_{E,n}^\uparrow:=Z^n\Phi_E,
\label{26}\en
$n=0,1,2,3,\ldots$. We see that $\Phi_{E,0}^\downarrow=\Phi_{E,0}^\uparrow=\Phi_{E}$. It must be clarified that we are not assuming that all the vectors in (\ref{26}) are non zero. In particular, from (\ref{26}) we see the following: if some $n_0>0$ exists such that, for instance, $\Phi_{E,n_0}^\downarrow=0$, then $\Phi_{E,n}^\downarrow=0$ for all $n\geq n_0$. Analogously, if some $m_0>0$ exists such that  $\Phi_{E,m_0}^\uparrow=0$, then $\Phi_{E,m}^\uparrow=0$ for all $m\geq m_0$.

\begin{thm}\label{thm1}
With the above definitions the following results hold: if $\lambda\in\mathbb{R}$, then

(1) if $\Phi_{E,n}^\uparrow\neq 0$, then $\Phi_{E,n}^\uparrow$ is an eigenstate of $H_0$ and
\be
H_0\Phi_{E,n}^\uparrow=(E+n\lambda)\Phi_{E,n}^\uparrow.
\label{27}\en

(2) if $\Phi_{E,n}^\downarrow\neq 0$, then $\Phi_{E,n}^\downarrow$ is an eigenstate of $H_0$ and
\be
H_0\Phi_{E,n}^\downarrow=(E-n\lambda)\Phi_{E,n}^\downarrow.
\label{28}\en

If $\lambda\in\mathbb{C}\setminus\mathbb{R}$, then $\Phi_{E,n}^\uparrow=\Phi_{E,n}^\downarrow=0$, for all $n\geq1$.

\end{thm}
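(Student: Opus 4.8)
The plan is to prove parts (1) and (2) by an easy induction on $n$, using the commutation relations already established in (\ref{23}), and then to handle the non-real case by a positivity/norm argument that forces all the vectors in (\ref{26}) to vanish. For part (1): applying $H_0$ to $\Phi_{E,n}^\uparrow=Z^n\Phi_E$ and using $H_0 Z^n = Z^n H_0 + [H_0,Z^n] = Z^n H_0 + n\lambda Z^n$ from (\ref{23}), we get $H_0\Phi_{E,n}^\uparrow = Z^n(H_0\Phi_E) + n\lambda Z^n\Phi_E = (E+n\lambda)\Phi_{E,n}^\uparrow$, using (\ref{25}); provided $\Phi_{E,n}^\uparrow\neq0$ this says it is an eigenvector with eigenvalue $E+n\lambda$. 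Part (2) is identical, using the second identity in (\ref{23}) for ${Z^\dagger}^n$ and the fact that $\overline{\lambda}=\lambda$ when $\lambda\in\mathbb{R}$, giving eigenvalue $E-n\lambda$.

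For the last statement, suppose $\lambda\in\mathbb{C}\setminus\mathbb{R}$ and suppose, for contradiction, that $\Phi_{E,1}^\uparrow=Z\Phi_E\neq0$. The point is that the computation above does not use reality of $\lambda$: from (\ref{23}) we still get $H_0\Phi_{E,n}^\uparrow = (E+n\lambda)\Phi_{E,n}^\uparrow$ for every $n$ such that $\Phi_{E,n}^\uparrow\neq0$. But $H_0$ is self-adjoint, so all its eigenvalues are real, whereas $E+n\lambda\notin\mathbb{R}$ for $n\geq1$ since $E\in\mathbb{R}$ and $n\lambda\notin\mathbb{R}$. Hence $\Phi_{E,n}^\uparrow=0$ for all $n\geq1$; the same argument applied to (\ref{28}) (with eigenvalue $E-n\overline{\lambda}\notin\mathbb{R}$) gives $\Phi_{E,n}^\downarrow=0$ for all $n\geq1$. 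One should remark that $Z^n\Phi_E$ makes sense: since $Z\in\Lc^\dagger(\D)$ (or $B(\Hil)$) and, when working in the $O^*$-algebra setting, $\Phi_E$ is implicitly taken in $\D$, all powers $Z^n$, ${Z^\dagger}^n$ map $\D$ into $\D$, so the manipulations are legitimate.

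The only real subtlety — and the step I would be most careful about — is the domain question: strictly speaking $H_0\Phi_E=E\Phi_E$ with $\Phi_E\in\Hil$ need not guarantee $\Phi_E\in\D$, so that $H_0 Z^n\Phi_E$ and the use of (\ref{23}) are a priori only formal. In the bounded case ($B(\Hil)$) there is nothing to check. In the $\Lc^\dagger(\D)$ case one resolves this by taking, as is standard in this framework and as the paragraph preceding the theorem implicitly does, the eigenvector $\Phi_E$ in the common domain $\D$; then $Z^n\Phi_E, {Z^\dagger}^n\Phi_E\in\D$, $H_0$ acts on all of them, and the commutator identities (\ref{23}) — which are identities in $\Lc^\dagger(\D)$ — apply verbatim. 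I would state this as a standing assumption (or simply note it), after which parts (1)--(2) and the non-real case follow exactly as above with no further work.
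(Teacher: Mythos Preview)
Your proof is correct and follows essentially the same route as the paper: both use the commutator identities (\ref{23}) directly to compute $H_0\Phi_{E,n}^\uparrow$ and $H_0\Phi_{E,n}^\downarrow$, and both dispatch the non-real case by observing that the resulting eigenvalues $E+n\lambda$ and $E-n\overline{\lambda}$ would be non-real, contradicting $H_0=H_0^\dagger$. Your additional remark on the domain issue (taking $\Phi_E\in\D$ so that (\ref{23}) applies as an identity in $\Lc^\dagger(\D)$) is a welcome clarification that the paper leaves implicit.
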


\begin{proof}
The proof of (\ref{27}) is based on formula (\ref{23}):
$$
H_0\Phi_{E,n}^\uparrow=H_0Z^n\Phi_{E}=\left([H_0,Z^n]+Z^nH_0\right)\Phi_E=\left(n\lambda Z^n+Z^nH_0\right)\Phi_E=(n\lambda+E)Z^n\Phi_E.
$$
Equation (\ref{28}) can be deduced in a similar way, by  using also the fact that $\lambda\in\mathbb{R}$. Otherwise, rather than (\ref{28}), we would get $H_0\Phi_{E,n}^\downarrow=(E-n\overline{\lambda})\Phi_{E,n}^\downarrow$. But, being $H_0=H_0^\dagger$, all its eigenvalues must be real, which is not possible if the imaginary part of $\lambda$ is not zero. If this is the case, then we conclude that $\|\Phi_{E,n}^\uparrow\|=\|\Phi_{E,n}^\downarrow\|=0$, for all $n\geq1$, and our claim follows.

\end{proof}

\vspace{2mm}

{\bf Remark:--} From now on, except when stated differently, we will assume that $\lambda$ is real. This is because, according to the previous theorem, this is the only way to  deal with vectors which are  not necessarily zero. 

\vspace{2mm} 

If $\lambda>0$, Theorem \ref{thm1} states that $Z$ is a raising operator, while $Z^\dagger$ is a lowering operator. Of course, the vectors in (\ref{26}) obey some orthogonality conditions. In particular we find that
\be
\langle\Phi_{E,n}^\downarrow,\Phi_{E,m}^\downarrow\rangle=\langle\Phi_{E,n}^\uparrow,\Phi_{E,m}^\uparrow\rangle=0,
\label{29}\en
whenever $n\neq m$. Also,
\be
\langle\Phi_{E,n}^\downarrow,\Phi_{E,m}^\uparrow\rangle=0,
\label{210}\en
for all $n+m>0$. These are all consequences of the fact that these vectors are eigenstates of $H_0$, corresponding to different eigenvalues.

Suppose now that the eigenvalues of $H_0$ are all non degenerate. We recall again that $\lambda$ is supposed to be real. In this case we can check that all non zero $\Phi_{E,n}^\downarrow$ and $\Phi_{E,n}^\uparrow$ satisfy other eigenvalue equations. For instance, since $H_0$ commutes with $Z^\dagger Z$, it is clear that $Z^\dagger Z\Phi_E$, if is not zero, is again an eigenstate of $H_0$ with eigenvalue $E$. Then, since the multiplicity of $E$ is one, $m(E)=1$, a real number $\mu_1$ exists such that $Z^\dagger Z\Phi_E=\mu_1\Phi_E$. Since $Z^\dagger Z$ is a non negative operator, $\mu_1\geq0$. In particular, $\mu_1=0$ if and only if $\Phi_{E,1}^\uparrow=Z\Phi_E=0$. This is because, taking the scalar product of $Z^\dagger Z\Phi_E=\mu_1\Phi_E$ with $\Phi_E$, we get $\mu_1\|\Phi_E\|^2=\|Z\Phi_E\|^2=\|\Phi_{E,1}^\uparrow\|^2$. This equality also shows that, being $\Phi_E\neq0$, $\mu_1=\frac{\|\Phi_{E,1}^\uparrow\|^2}{\|\Phi_{E}\|^2}$, which is positive, as already stated. The same procedure can be repeated for other eigenvalues: if $Z^\dagger Z\Phi_{E,1}^\uparrow$, is not zero, it is an eigenstate of $H_0$ with eigenvalue $E+\lambda$. Then, since  $m(E+\lambda)=1$, a real number $\mu_2$ exists such that $Z^\dagger Z\Phi_{E,1}^\uparrow=\mu_2\Phi_{E,1}^\uparrow$. Again, $\mu_2=0$ if and only if $\Phi_{E,2}^\uparrow=0$, which cannot be the case here, because of our assumption on  $Z^\dagger Z\Phi_{E,1}^\uparrow$. Hence $\mu_2=\frac{\|Z\Phi_{E,1}^\uparrow\|^2}{\|\Phi_{E,1}^\uparrow\|^2}$, which is obviously positive. Iterating this procedure, we find that, if $Z\Phi_{E,n-1}^\uparrow\neq0$ for a given $n$, then $Z\Phi_{E,k-1}^\uparrow\neq0$ for all $k\leq n$ and we have
$$
Z^\dagger Z \Phi_{E,n-1}^\uparrow=\mu_n \Phi_{E,n-1}^\uparrow, \qquad\mbox{ with }\qquad  \mu_n=\frac{\|Z\Phi_{E,n-1}^\uparrow\|^2}{\|\Phi_{E,n-1}^\uparrow\|^2}=\frac{\|\Phi_{E,n}^\uparrow\|^2}{\|\Phi_{E,n-1}^\uparrow\|^2}.
$$
The norm of the vectors $ \Phi_{E,n}^\uparrow$ can be rewritten in terms of the $\mu_n$'s:
$$
\|\Phi_{E,n}^\uparrow\|^2=\mu_n\,\mu_{n-1}\cdots\mu_1\|\Phi_{E,0}^\uparrow\|^2.
$$
Moreover, we can deduce that $\mu_m=0$ for some $m>0$, $\mu_{m-1}\neq0$, if and only if $\Phi_{E,k}^\uparrow=0$, for all $k\geq m$. This implies, in view of Theorem \ref{thm1}, that the point spectrum of $H_0$, $\sigma_p(H_0)$, \cite{rs}, is bounded from above by $E_{max}:=E+m\lambda$.

These results, other than for $Z^\dagger Z$, can be restated for $ZZ^\dagger$, since this operator also commutes with $H_0$ when $\lambda\in\mathbb{R}$. We have the following:

if $Z^\dagger\Phi_{E,n-1}^\downarrow\neq0$ for a given $n$, then $Z^\dagger\Phi_{E,k-1}^\downarrow\neq0$ for all $k\leq n$ and we have
$$
ZZ^\dagger \Phi_{E,n-1}^\downarrow=\nu_n \Phi_{E,n-1}^\downarrow, \qquad\mbox{ with }\qquad  \nu_n=\frac{\|Z^\dagger\Phi_{E,n-1}^\downarrow\|^2}{\|\Phi_{E,n-1}^\downarrow\|^2}=\frac{\|\Phi_{E,n}^\downarrow\|^2}{\|\Phi_{E,n-1}^\downarrow\|^2}.
$$
Also:
$$
\|\Phi_{E,n}^\downarrow\|^2=\nu_n\,\nu_{n-1}\cdots\nu_1\|\Phi_{E,0}^\downarrow\|^2,
$$
and $\nu_m=0$ for some $m>0$, $\nu_{m-1}\neq0$, if and only if $\Phi_{E,k}^\downarrow=0$, for all $k\geq m$. This implies, again in view of Theorem \ref{thm1}, that  $\sigma_p(H_0)$ is bounded from below by $E_{min}:=E-m\lambda$. This is the case, for instance, of any non negative Hamiltonian, $H_0\geq0$.

\vspace{2mm}

{\bf Remark:--} It is possible to deduce some relations between $\mu_n$ and $\nu_n$. However, these relations become more and more complicated when $n$ increases and are not so useful. For instance, it is easy to see that
$$
\mu_1-\nu_1=\frac{\langle[Z^\dagger,Z]\Phi_E,\Phi_E\rangle}{\langle\Phi_E,\Phi_E\rangle}.
$$

\vspace{2mm}

We have already shown that bosonic and fermionic operators produce examples of this functional settings, at least when the Hamiltonian $H_0$ of the system is proportional to a number operator. Here we don't insist on examples of this kind,  because several interesting examples have already been considered by Fernandez along the years and since we are more interested in extending these results  to other situations not covered so far. In particular, we will consider the case in which several operators $Z_j$ exist which satisfy a commutation rule like the one in (\ref{21}), opening also to the possibility that the Hamiltonian is not self-adjoint. Another extension, discussed in the next section, generalize (\ref{21}) in such a way that all the factorizable Hamiltonians, and those written in terms of generalized Heisenberg algebras, \cite{bcg,curado1,curado2}, fit into the scheme.

\section{ALOs for $H_0=H_0^\dagger$, pt. 2}\label{sect3}

Let us assume that an Hamiltonian $H_0=H_0^\dagger$ and $Z$, operators on $\Hil$ as in the previous section, obey the following commutation rule:
\be
[H_0,Z]=\lambda\, Z [Z^\dagger,Z],
\label{31}\en
for some real number $\lambda$. Here, except when stated, to fix the ideas we will assume that $\lambda$ is strictly positive. As in Section \ref{sect2}, formula (\ref{31}) makes sense for bounded $H_0$ and $Z$, or when they both belong to $\Lc^\dagger(\D)$ for some dense $\D$ in $\Hil$.  We notice that formula (\ref{31}) returns (\ref{21}) when $[Z,Z^\dagger]$ is the identity operator (or, with some redefinition of the quantities, is proportional to the identity operator), but is different from (\ref{21}) otherwise. We also notice that, if $Z=Z^\dagger$ (or, more in general, if $Z$ is normal), then (\ref{31}) implies that $[H_0,Z]=0$.

Formula (\ref{31}) is equivalent to 
\be
[H_0,Z^\dagger]=-\lambda\,  [Z^\dagger,Z]Z^\dagger.
\label{32}\en
Moreover, (\ref{31}) and (\ref{32}) can be extended as follows:
\be
[H_0,Z^n]=\lambda\, Z [Z^\dagger,Z^n], \qquad [H_0,{Z^\dagger}^n]=-\lambda\,  [Z,{Z^\dagger}^n]Z^\dagger,
\label{33}\en
for all $n\geq0$. An interesting difference with respect to the results in Section \ref{sect2} is that, in general, $Z^\dagger Z$ and $ZZ^\dagger$ behave differently regarding their commutativity with $H_0$. In fact, using (\ref{31}) and (\ref{32}), we can easily check that
\be
[H_0,ZZ^\dagger]=0, 
\label{34}\en
while for $[H_0,Z^\dagger Z]$ we deduce the following equivalent expressions, in general different from zero:
\be[H_0,Z^\dagger Z]=\lambda[ZZ^\dagger,Z^\dagger Z]= [H_0,[Z^\dagger,Z]]=\lambda[Z^\dagger Z,[Z^\dagger,Z]].
\label{35}\en

Let us now consider some interesting example satisfying (\ref{31}) and, consequently, the other equalities deduced above.

{\bf A first class of Examples: $H_0$ is factorizable. } Suppose that $H_0$ is self-adjoint (as always in Sections \ref{sect2} and \ref{sect3}) and factorizable. This means that it can be written as $H_0=\omega A^\dagger A$, for some operator $A$ and some real $\omega$. Of course, a complex $\omega$ or a factorization of the form $BA$, $B\neq A^\dagger$, would not be compatible with the condition  $H_0=H_0^\dagger$. In this case, if we put $Z=A^\dagger$ (independently of the explicit expression of $A$) we deduce that
$$
[H_0,Z]=\omega[A^\dagger A,A^\dagger]=\omega A^\dagger [A,A^\dagger]=\omega Z[Z^\dagger,Z],
$$
which is exactly formula (\ref{31}) with $\lambda=\omega$.

Of course, if $A=a$, where $[a,a^\dagger]=\1_b$, and $H_0=\omega a^\dagger a$, see Section \ref{sect2}, we are exactly in this situation. However, since $[Z^\dagger,Z]=[a,a^\dagger]=\1_b$, formula (\ref{31}) simplifies and returns formula (\ref{21}).

Formula (\ref{31}) is also satisfied if $A=c$, where $\{c,c^\dagger\}=\1_f$, and $H_0=\omega c^\dagger c$, see again Section \ref{sect2}, since $H_0$ is factorized. In this case, however, since $[Z^\dagger,Z]=[c,c^\dagger]=2cc^\dagger-\1_f$, formula (\ref{31}) does not return formula (\ref{21}).

Another example of the same kind which satisfies (\ref{31}) but not (\ref{21}) can be constructed from quons, \cite{andr,bagquons,fiv,green,kar,moh}. In this case we have ladder operators, $b$ and $b^\dagger$, satisfying the $q$-mutation relation
$$
bb^\dagger-qb^\dagger b=\1,
$$
where $q\in[-1,1]$ and $\1$ is the identity operator in the {\em quonic} Hilbert space, \cite{fiv,moh}. Once again, we take as Hamiltonian the operator $H_0=\omega b^\dagger b$. We recall that $H_0$ is not proportional to the number operator $N$ for quons, but it is still diagonal in the eigenvectors of $N$, \cite{fiv,moh}. Formula (\ref{31}) becomes, taking $Z=b^\dagger$,
$$
[H_0,Z]=[H_0,b^\dagger]=\omega b^\dagger \left(\1+b^\dagger b(q-1)\right)= \omega  Z\left(\1+ZZ^\dagger (q-1)\right),
$$
which returns exactly (\ref{21}) when $q=1$, i.e. when quons become ordinary bosons, but not for the other values of $q$.

\vspace{3mm}

{\bf Another class of examples: generalized Heisenberg algebra.} In \cite{curado1,curado2}, and references therein, the notion of {\em generalized Heisenberg algebra } has been proposed and analyzed. This is based on two operators, $d$ and $H_0=H_0^\dagger$, and a suitable increasing real function $f(x)$, satisfying the equalities
\be
d\,H_0=f(H_0)d, \qquad [d,d^\dagger]=f(H_0)-H_0.
\label{36}\en
From the first equation we deduce $H_0d^\dagger=d^\dagger f(H_0)$. In \cite{curado1,curado2} some examples of this algebraic settings have been proposed, and others have been considered in \cite{bcg} for $H_0$ not necessarily self-adjoint. For instance, the P\"oschl–Teller or the infinite square well potentials can be considered in this perspective.  Now, independently of the explicit form of $H_0$, if we take $Z=d^\dagger$ we have
$$
[H_0,Z]=H_0d^\dagger-d^\dagger H_0=d^\dagger\left(f(H_0)-H_0\right)=d^\dagger[d,d^\dagger]=Z[Z^\dagger,Z],
$$
which is exactly formula (\ref{31}) with $\lambda=1$. We observe that, since $[H_0,ZZ^\dagger]=0$,
$$
[H_0,Z^\dagger Z]=[H_0,[Z^\dagger,Z]]=[H_0,[d,d^\dagger]]=[H_0,f(H_0)-H_0]=0.
$$

\vspace{3mm}

In all examples considered above it is easy to check that, as expected, $[H_0,ZZ^\dagger]=0$. Interestingly enough, we can also check (as we did in the last example) that $[H_0,Z^\dagger Z]=0$, so that (\ref{35}) simplifies significantly. This shows that the main assumption of  Theorem \ref{thm2} below holds true in many situations, even if does not appear to be completely general. Before stating the theorem we introduce, as we did in Section \ref{sect2}, the following vectors:
\be
\Phi_{E,n}^\uparrow:=Z^n\Phi_E,
\label{37}\en
$n\geq0$, where $\Phi_E$ is a (nonzero) eigenstate of $H_0$ with eigenvalue $E$, see (\ref{25}): $H_0\Phi_E=E\Phi_E$. Of course, $\Phi_{E,0}^\uparrow=\Phi_E\neq0$. However, it might happen in principle that a certain $n_0>0$ does exist such that  $\Phi_{E,n_0}^\uparrow=0$, with $\Phi_{E,n_0-1}^\uparrow\neq0$. This implies that $\Phi_{E,n}^\uparrow=0$ for all $n\geq n_0$. Of course, one might wonder why we are not considering here the vectors $\Phi_{E,n}^\downarrow:={Z^\dagger}^n\Phi_E$. We will clarify the reason for this later.

\begin{thm}\label{thm2}
Let us assume that all the eigenvalues of $H_0$ are non degenerate, and that $[H_0,Z^\dagger Z]=0$. Then, for all $n\geq0$ such that $\Phi_{E,n}^\uparrow\neq0$, we can introduce
\be
\mu_{E,n}:=\frac{\langle\Phi_{E,n}^\uparrow,[Z^\dagger,Z]\Phi_{E,n}^\uparrow\rangle}{\|\Phi_{E,n}^\uparrow\|^2},
\label{38}\en
and we have
\be
[Z^\dagger,Z]\Phi_{E,n}^\uparrow=\mu_{E,n}\Phi_{E,n}^\uparrow.
\label{39}\en
Moreover, calling $E_n^\uparrow=\lambda\sum_{k=0}^{n-1}\mu_{E,k}+E$, $n\geq1$,  and $E_0^\uparrow=E$, we get
\be
H_0\Phi_{E,n}^\uparrow=E_n^\uparrow\,\Phi_{E,n}^\uparrow.
\label{310}\en

\end{thm}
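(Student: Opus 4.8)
The plan is to establish (\ref{39}) and (\ref{310}) together by induction on $n$. The single structural fact I need beyond the hypothesis is that $[Z^\dagger,Z]$ commutes with $H_0$: writing $[Z^\dagger,Z]=Z^\dagger Z-ZZ^\dagger$, the first summand commutes with $H_0$ by the standing assumption $[H_0,Z^\dagger Z]=0$ and the second by (\ref{34}), so that $[H_0,[Z^\dagger,Z]]=0$. I will also use that $[Z^\dagger,Z]$ is $\dagger$-symmetric, so that the numerator of (\ref{38}) is real.

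For the base case $n=0$ one has $\Phi_{E,0}^\uparrow=\Phi_E$, and $H_0\Phi_E=E\Phi_E=E_0^\uparrow\Phi_E$ is the defining relation (\ref{25}) together with $E_0^\uparrow:=E$. Since $[Z^\dagger,Z]$ commutes with $H_0$, the vector $[Z^\dagger,Z]\Phi_E$ is either $0$ or an eigenvector of $H_0$ with the non-degenerate eigenvalue $E$; in both cases it is a scalar multiple of $\Phi_E$, say $[Z^\dagger,Z]\Phi_E=\mu_{E,0}\Phi_E$, and pairing this identity with $\Phi_E$ identifies $\mu_{E,0}$ with the right-hand side of (\ref{38}). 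This gives (\ref{39}) and (\ref{310}) at $n=0$.

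For the inductive step, suppose $\Phi_{E,n}^\uparrow\neq0$ and that (\ref{39})--(\ref{310}) hold at level $n$, and let $\Phi_{E,n+1}^\uparrow\neq0$ (if $\Phi_{E,n+1}^\uparrow=0$ there is nothing to prove). Using (\ref{31}) and then the two inductive hypotheses,
$$
H_0\Phi_{E,n+1}^\uparrow=\big([H_0,Z]+ZH_0\big)\Phi_{E,n}^\uparrow=\big(\lambda\,Z[Z^\dagger,Z]+ZH_0\big)\Phi_{E,n}^\uparrow=\big(\lambda\mu_{E,n}+E_n^\uparrow\big)\,\Phi_{E,n+1}^\uparrow,
$$
and $\lambda\mu_{E,n}+E_n^\uparrow=E_{n+1}^\uparrow$ by the recursive definition of the $E_n^\uparrow$, which is exactly (\ref{310}) at level $n+1$. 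In particular $E_{n+1}^\uparrow\in\sigma_p(H_0)$, hence is non-degenerate; arguing as in the base case, $[Z^\dagger,Z]\Phi_{E,n+1}^\uparrow$ lies in the one-dimensional $E_{n+1}^\uparrow$-eigenspace and therefore equals $\mu_{E,n+1}\Phi_{E,n+1}^\uparrow$, with $\mu_{E,n+1}$ as in (\ref{38}). This closes the induction. For a general $n$ with $\Phi_{E,n}^\uparrow\neq0$, the relation $\Phi_{E,n}^\uparrow=Z^{\,n-k}\Phi_{E,k}^\uparrow$ forces $\Phi_{E,k}^\uparrow\neq0$ for all $k\le n$, so the induction can indeed be run up to $n$ and, in particular, $E_n^\uparrow$ is well defined.

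The only point needing care is the domain bookkeeping implicit in the displayed computations: for these to be legitimate one needs $\Phi_E$ — and hence all the vectors $Z^n\Phi_E$ — to lie in the common dense invariant domain $\D$ when the operators belong to $\Lc^\dagger(\D)$ (in the bounded case $B(\Hil)$ there is nothing to verify), which is precisely why $H_0$ and $Z$ are assumed throughout to sit in $\Lc^\dagger(\D)$ or $B(\Hil)$. Granting this, the argument is a routine induction; the non-degeneracy hypothesis is what forces $[Z^\dagger,Z]$ to act as the scalar $\mu_{E,n}$ on $\Phi_{E,n}^\uparrow$, and the assumption $[H_0,Z^\dagger Z]=0$ is what makes that scalar action — and thus the recursion defining $E_n^\uparrow$ — available.
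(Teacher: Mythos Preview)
Your proof is correct and follows essentially the same induction as the paper: establish $[H_0,[Z^\dagger,Z]]=0$, use non-degeneracy to get the scalar action (\ref{39}) at each level, and feed that into the commutator identity (\ref{31}) to propagate (\ref{310}). The only additions beyond the paper's own write-up are your explicit remark that $\Phi_{E,k}^\uparrow\neq0$ for all $k\le n$ whenever $\Phi_{E,n}^\uparrow\neq0$, and the domain caveat, both of which are welcome clarifications rather than departures.
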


\begin{proof}
	First of all, it is clear that $H_0$ commutes with $[Z^\dagger,Z]$. We now use induction to prove our claim starting with $n=0$. In this case we have already commented that $\Phi_{E,0}^\uparrow=\Phi_E\neq0$. Hence (\ref{310}) is clearly satisfied with $E_0^\uparrow=E$. Moreover, since $[H_0,[Z^\dagger,Z]]=0$, we deduce that $[Z^\dagger,Z]\Phi_{E,0}^\uparrow$ is an eigenstate of $H_0$ with eigenvalue $E$. But all the eigenvalues of $H_0$ have multiplicity one. Therefore, a (in principle) complex number $\mu_{E,0}$ exists such that
	$$
	[Z^\dagger,Z]\Phi_{E,0}^\uparrow=\mu_{E,0}\Phi_{E,0}^\uparrow.
	$$
	Now, taking the scalar product of this equality with $\Phi_{E,0}^\uparrow$, we deduce formula (\ref{38}) with $n=0$. Incidentally, it is clear that $\mu_{E,0}$ is real, even if nothing can be said a priori on its sign.
	
	Let us assume now that, for a given $k\geq0$, $\Phi_{E,k}^\uparrow\neq0$ and formulas (\ref{38})-(\ref{310}) are satisfied. We will now prove that, if also $\Phi_{E,k+1}^\uparrow\neq0$, similar formulas can be deduced. First we notice that $\Phi_{E,k+1}^\uparrow=Z\Phi_{E,k}^\uparrow$. Hence we have, using (\ref{31}),
	$$
	H_0\Phi_{E,k+1}^\uparrow=\left([H_0,Z]+ZH_0\right)\Phi_{E,k}^\uparrow=\left(\lambda\, Z [Z^\dagger,Z]+ZH_0\right)\Phi_{E,k}^\uparrow.
	$$
	But, because of the induction assumption, we have
	$$
	[Z^\dagger,Z]\Phi_{E,k}^\uparrow=\mu_{E,k}\Phi_{E,k}^\uparrow, \quad \mbox{ and }\quad H_0\Phi_{E,k}^\uparrow=E_k^\uparrow\Phi_{E,k}^\uparrow.
	$$
	Therefore,
	$$
	H_0\Phi_{E,k+1}^\uparrow=\left(\lambda\mu_{E,k}+E_{k}^\uparrow\right)Z\Phi_{E,k}^\uparrow=E_{k+1}^\uparrow\Phi_{E,k+1}^\uparrow,
	$$
	which is what we had to check. Here we have used the following equality:
	$$
	\lambda\,\mu_{E,k}+E_{k}^\uparrow=\lambda\,\mu_{E,k}+\left(\lambda\sum_{j=0}^{k-1}\mu_{E,j}+E\right)=\lambda\sum_{j=0}^{k}\mu_{E,j}+E=E_{k+1}^\uparrow.
	$$ 
	Formulas (\ref{38}) and (\ref{39}) for $\Phi_{E,k+1}^\uparrow$ can now be deduced easily.

\end{proof}

\vspace{2mm}

{\bf Remarks:--} (1) This theorem shows that the non zero $\Phi_{E,n}^\uparrow$ are eigenstates of $H_0$ and of $[Z^\dagger,Z]$. In fact, they are also eigenstates of, separately, $ZZ^\dagger$ and $Z^\dagger Z$:
\be
ZZ^\dagger\Phi_{E,n}^\uparrow=\alpha_{E,n}\Phi_{E,n}^\uparrow, \qquad Z^\dagger Z\Phi_{E,n}^\uparrow=\beta_{E,n}\Phi_{E,n}^\uparrow,
\label{311}\en
with
$$
\alpha_{E,n}=\frac{\|Z^\dagger\Phi_{E,n}^\uparrow\|^2}{\|\Phi_{E,n}^\uparrow\|^2}, \qquad \beta_{E,n}=\frac{\|Z\Phi_{E,n}^\uparrow\|^2}{\|\Phi_{E,n}^\uparrow\|^2}.
$$
Hence
\be
\mu_{E,n}=\beta_{E,n}-\alpha_{E,n}.
\label{312}\en

\vspace{2mm}

(2) Of course, in what deduced in the theorem, it is essential to work with non zero vectors. If, for some integer $m$, $\Phi_{E,m}^\uparrow=0$, most of the results we have deduced should be reconsidered, since they can make no sense, mainly because the zero vector cannot be the eigenvector of any operator. This is what happens, in particular, if $H_0$ is bounded from above or from below.

\vspace{2mm}

Definition (\ref{37}) implies that $Z$ is a raising operator\footnote{ $Z$ should be called a raising operator if the energy of the state $\Phi_{E,n+1}^\uparrow=Z\Phi_{E,n}^\uparrow$ is higher than that of $\Phi_{E,n}^\uparrow$. This is not so clear for us, since we cannot say much on the sign of $\mu_{E,n}$. Nevertheles, we keep this terminology since $Z$ increases the quantum number $n$.}. An interesting consequence of Theorem \ref{thm2} and of formulas (\ref{311}) is that $Z^\dagger$ is a lowering operator. In fact, because of (\ref{37}), we have $Z^\dagger Z\Phi_{E,n}^\uparrow=Z^\dagger \Phi_{E,n+1}^\uparrow=\beta_{E,n}\Phi_{E,n}^\uparrow$. Summarizing we have
\be
Z\Phi_{E,n}^\uparrow=\Phi_{E,n+1}^\uparrow, \qquad Z^\dagger\Phi_{E,n}^\uparrow=\beta_{E,n-1}\Phi_{E,n-1}^\uparrow,
\label{313}\en
whenever $Z^\dagger\Phi_{E,n}^\uparrow\neq0$. It is maybe useful to notice that this is not so evident a priori, because (\ref{31}) introduces a sort of asymmetry between $Z$ and $Z^\dagger$. Luckily enough, this asymmetry is only reflected by the difference of the coefficients in formula (\ref{313}). It might be useful to notice that formula (\ref{313}), as well as all the other results in this section, could be adapted to the case considered in Section \ref{sect2} assuming that the ALO $Z$ satisfies $[Z^\dagger,Z]=\1$. In this case, in fact, (\ref{31}) collapses into (\ref{21}). 

\vspace{3mm}

Comparing Section \ref{sect2} and Section \ref{sect3}, it is clear that what is missing here are the {\em down-arrow vectors} $\Phi_{E,n}^\downarrow=(Z^\dagger)^n\Phi_E$, see (\ref{26}). The reason is that there is no analogous version of Theorem \ref{thm2}, and this is a simple consequence of the particular form of the commutation rule (\ref{31}). For instance, in order to check if $\Phi_{E,1}^\downarrow=Z^\dagger\Phi_E$ is an eigenstate of $H_0$ (assuming it is not zero), we compute the following:
$$
H_0\Phi_{E,1}^\downarrow=\left([H_0,Z^\dagger]+Z^\dagger H_0\right)\Phi_E=\left(-\lambda\,  [Z^\dagger,Z]Z^\dagger+Z^\dagger H_0\right)\Phi_E,
$$
using (\ref{32}). Now, while it is easy to compute the second contribution in the RHS, $Z^\dagger H_0\Phi_E=E\Phi_{E,1}^\downarrow$, it is not clear what would be the result of $[Z^\dagger,Z]Z^\dagger\Phi_E$, even in the case of all eigenvalues of $H_0$ with multiplicity one. Hence we cannot conclude, without adding more conditions, that $\Phi_{E,1}^\downarrow$ is an eigenstate of $H_0$.

 However, as we have seen before, $Z^\dagger$ acts as a lowering operator when acting on the other vectors, $\Phi_{E,n}^\uparrow$.

\section{ALOs for $H\neq H^\dagger$}\label{sect4}

From now on we will no longer require that the Hamiltonian is self-adjoint, except if stated explicitly: $H\neq H^\dagger$. Moreover, we will assume that $H$ admits more than just a single ladder operator. Then we assume that:
\be
[H,Z_j]=\lambda_j Z_j,
\label{41}\en
$\lambda_j\in\mathbb{C}$, for $j=1,2,\ldots,N$, or, equivalently, that  
\be
[H^\dagger,Z_j^\dagger]=-\overline{\lambda_j}\, Z_j^\dagger.
\label{42}\en
Of course, since $H\neq H^\dagger$, there is no reason to require that $\lambda_j\in\mathbb{R}$. However, it is clear that these commutators return those in (\ref{21}) and (\ref{22}) if $N=1$ and $H=H_0=H_0^\dagger$.

Formula (\ref{23}) can be deduced also in this situation, and we get:
\be
[H,Z_j^n]=n\lambda_j Z_j^n, \qquad [H^\dagger,{Z_j^\dagger}^n]=-n\overline{\lambda_j}\, {Z_j^\dagger}^n, 
\label{43}\en
for all $n=0,1,2,\ldots$ and $j=1,2,\ldots,N$. 

Let us suppose that two, in general different, nonzero vectors exist, $\varphi_E$ and $\psi_E$, such that
\be
H\varphi_E=E\varphi_E, \qquad H^\dagger\psi_E=\overline{E}\,\psi_E.
\label{44}\en
These are respectively eigenstates of $H$ and $H^\dagger$, with complex conjugate eigenvalues. We define the following vectors
\be
\varphi_{E:j,n}=Z_j^n\varphi_E, \qquad \psi_{E:j,n}={Z_j^\dagger}^n\psi_E,
\label{45}\en
for all $n=0,1,2,\ldots$ and $j=1,2,\ldots,N$. Of course, if some integer $n_0(j)$ exists such that $\varphi_{E:j,n_0(j)}=0$, it follows that $\varphi_{E:j,n}=0$ for all $n\geq n_0(j)$. Similarly, if some integer $m_0(j)$ exists such that $\psi_{E:j,m_0(j)}=0$,  it follows that $\psi_{E:j,m}=0$ for all $m\geq m_0(j)$.

The following result shows that all nonzero $\varphi_{E:j,n}$ and  $\psi_{E:j,n}$ are eigenstates of $H$ and $H^\dagger$. In particular we have
\be
H\varphi_{E:j,n}=\epsilon_{E:j,n}\varphi_{E:j,n}, \qquad H^\dagger\psi_{E:j,n}=\overline{\epsilon_{E:j,-n}}\,\psi_{E:j,n},
\label{46}\en
where
\be
\epsilon_{E:j,n}=E+n\lambda_j.
\label{47}\en
The proof is not significantly different from that of (\ref{27}) and (\ref{28}), and will not be repeated. Calling $\F_\varphi=\{\varphi_{E:j,n}:\, j=1,2,\ldots, N; \,n\geq0\}$ and $\F_\psi=\{\psi_{E:j,n}:\, j=1,2,\ldots, N; \,n\geq0\}$, these two sets are biorthogonal in the following sense:
\be
\mbox{if } n\lambda_j\neq -m\lambda_l \quad\Rightarrow\quad\langle\psi_{E:l,m},\varphi_{E:j,n}\rangle=0.
\label{48}
\en
This is a consequence of (\ref{46}) and of the fact that $\epsilon_{E:j,n}\neq \epsilon_{E:l,-m}$ if and only if $n\lambda_j\neq -m\lambda_l$. This result is not surprising since we are taking the scalar product of eigenvectors of $H$ and $H^\dagger$, which are biorthogonal sets under very general situations. Notice that, in principle, these operators are not isospectrals, except that in particular cases. For instance, this happens when $E$ and $\lambda_j$ are real, and when for each $j=1,2,\ldots,N$ there is a $l=1,2,\ldots,N$ such that $\lambda_j=-\lambda_l$.

In Sections \ref{sect2} and \ref{sect3} we have shown that the eigenstates of $H_0$ are also eigenstates of other operators constructed with $Z$ and $Z^\dagger$. We can extend these results also to the present settings. This is because, as it is easy to check,
\be
[H,Z_jZ_k]=\left(\lambda_j+\lambda_k\right)Z_jZ_k.
\label{49}\en
It is clear that the order of $Z_j$ and $Z_k$ does not affect this formula. In other words, if $\lambda_j+\lambda_k=0$, both $Z_jZ_k$ and $Z_kZ_j$ commute with $H$. And this true both when $[Z_j,Z_k]=0$, and when $Z_j$ and $Z_k$ do not commute. Moreover, formula (\ref{49}) can be easily extended as follows:
\be
[H,Z_{j_1}Z_{j_2}\cdots Z_{j_k}]=\left(\sum_{l=1}^k\lambda_{j_l}\right)Z_{j_1}Z_{j_2}\cdots Z_{j_k},
\label{410}\en
$k=1,2,3,\ldots$, which produces an equivalent formula for the adjoint:
\be
[H^\dagger,Z_{j_1}^\dagger Z_{j_2}^\dagger\cdots Z_{j_k}^\dagger]=-\left(\sum_{l=1}^k\overline{\lambda_{j_l}}\right)Z_{j_1}^\dagger Z_{j_2}^\dagger\cdots Z_{j_k}^\dagger,
\label{411}\en
$k=1,2,3,\ldots$. Again, the order of the operators is important in general, but if $\sum_{l=1}^k\lambda_{j_l}=0$, all the permutations of  the product $Z_{j_1}Z_{j_2}\cdots Z_{j_k}$ commute with $H$ and, similarly, all the permutations of $Z_{j_1}^\dagger Z_{j_2}^\dagger\cdots Z_{j_k}^\dagger$ commute with $H^\dagger$. 

\vspace{2mm}

{\bf Remark:--}  It is clear that formula (\ref{49}) returns (\ref{24}) when $Z_j=Z$ and $Z_k=Z^\dagger$, and when $H=H_0=H_0^\dagger$. This shows that all the results in this section could be specialized to the case considered in Section \ref{sect2}, if several ALOs exist.

\vspace{2mm}

We introduce the following 

\begin{defn}
	The operators $(H,Z_1,\ldots,Z_N)$ obeying (\ref{41}) satisfy {\em Condition $N_0$} if, for some $N_0>1$, $\sum_{l=1}^{N_0}\lambda_{j_l}=0$.
\end{defn}

Hence: if $(H,Z_1,\ldots,Z_N)$ satisfy {\em Condition $N_0$}, $H$ commutes with $Z_{j_1} Z_{j_2}\cdots Z_{j_{N_0}}$ and with any its permutation. Moreover, $H^\dagger$ commutes with $Z_{j_1}^\dagger Z_{j_2}^\dagger\cdots Z_{j_{N_0}}^\dagger$ and with any its permutation. It is now easy to conclude that, under  Condition $N_0$, and assuming that all the eigenvalues of $H$ have multiplicity one, $\varphi_{E:j,n}$ is an eigenstate of $Z_{j_1} Z_{j_2}\cdots Z_{j_{N_0}}$ and $\psi_{E:j,n}$ is an eigenstate of $Z_{j_1}^\dagger Z_{j_2}^\dagger\cdots Z_{j_{N_0}}^\dagger$, together with all their permutations. In other words we have
\be
Z_{j_1} Z_{j_2}\cdots Z_{j_{N_0}}\varphi_{E:j,n}=z_{E:j,n}(j_1,j_2,\ldots,j_{N_0})\varphi_{E:j,n}
\label{412}\en
and 
\be
Z_{j_1}^\dagger Z_{j_2}^\dagger\cdots Z_{j_{N_0}}^\dagger\psi_{E:j,n}=w_{E:j,n}(j_1,j_2,\ldots,j_{N_0})\psi_{E:j,n}
\label{413}\en
for some complex numbers $z_{E:j,n}(j_1,j_2,\ldots,j_{N_0})$ and $w_{E:j,n}(j_1,j_2,\ldots,j_{N_0})$. 

\vspace{2mm}

{\bf Remark:--} The presence of $j_1,j_2,\ldots,j_{N_0}$ in the brackets of $z_{E:j,n}$ and $w_{E:j,n}$ is important. Suppose, to simplify the situation, that $N_0=2$. This means that $\varphi_{E:j,n}$ is an eigenstate of both $Z_1Z_2$ and of $Z_2Z_1$: \be Z_1Z_2\varphi_{E:j,n}=z_{E:j,n}(j_1,j_2)\varphi_{E:j,n}, \qquad Z_2Z_1\varphi_{E:j,n}=z_{E:j,n}(j_2,j_1)\varphi_{E:j,n}.\label{414}\en
There is no reason, if $[Z_1,Z_2]\neq0$, to have $z_{E:j,n}(j_1,j_2)= z_{E:j,n}(j_2,j_1)$, In fact, we will see soon, in some concrete applications, that this is not  true.

\vspace{2mm}

We repeat once more that all the results deduced in this section can be restated for the case of self-adjoint Hamiltonians, i.e. when $H=H^\dagger$. In other words, the results discussed in this section can be seen as an extension of those of Section \ref{sect2} to the case when more than just one ALO exist.

\subsection{Examples}

This part of Section \ref{sect4} is devoted to the analysis of some examples fitting the structure proposed so far, starting with one-dimensional pseudo-bosons and then moving to a two-dimensional case. Also, applications to two different versions of 2-d non hermitian coupled harmonic oscillators will be discussed. 

\subsubsection{Pseudo-bosons in $d=1$}

The first example is related to the so-called $\D$-pseudo bosons ($\D$-PBs), in its $d=1$ version. We refer to \cite{baginbagbook} for many mathematical properties of these operators. Here we just  need to say that, see \cite{bagrus2018}, if $a$ and $b$ are operators on $\Hil$ satisfying $[a,b]f=f$ for all $f\in\D$, $\D$ a fixed subset of $\Hil$\footnote{In many one dimensional systems we have $\Hil=\Lc^2(\mathbb{R})$ and $\D=\Sc(\mathbb{R})$, the set of test functions.}, then $a$ and $b$ are good candidates for being $\D$-pseudo bosonic operators or weak pseudo bosons, \cite{wpbs}.  Indeed, as it is widely discussed in \cite{baginbagbook,wpbs}, the validity of the commutation relation $[x,y]=\1$ (in some proper sense, e.g. as unbounded operators) does not guarantee that $x$ and $y$ are $\D$-PBs. They need to satisfy some extra conditions. Noneless, as discussed in \cite{bagrus2018}, in many relevant situations it is possible to check that $a$ and $b$ are elements of a *-algebra $\Lc^\dagger(\D)$, see Appendix, and, as such,  all powers and combinations of powers of $a$ and $b$ are still elements of $\Lc^\dagger(\D)$. In particular, thought being unbounded, $[a,b]$ is a well defined element of $\LD$.

After these very minimal preliminaries, let us define an Hamiltonian $H=\omega ba=\omega N$, $\omega\in\mathbb{R}$ and $N=ba$. Similar Hamiltonians appear quite often when dealing with pseudo-Hermitian quantum mechanics, \cite{mosta}, or with PT-quantum mechanics, \cite{ben,benbook}. For instance, they appear in connection with the Swanson model, or with many non self-adjoint versions of the harmonic oscillator, \cite{swan,dapro,bagpb5,baginbagbook}.

Let us now fix $Z_1=a$ and $Z_2=b$. It is easy to see that, if we put $\lambda_1=-\lambda_2=-\omega$, condition (\ref{41}) is satisfied: $[H,Z_1]=\lambda_1 Z_1$ and $[H,Z_2]=\lambda_2 Z_2$. Condition (\ref{42}) can be also checked easily, noticing that $H^\dagger=\omega N^\dagger$. Now, since $a$ and $b$ are $\D$-PBs, two non-zero vectors exist in $\D$ such that $a\varphi_0=b^\dagger\psi_0=0$. These are the {\em seeds vectors} out of which we can construct the following (non-zero) vectors in $\D$:
$$
\varphi_n=\frac{1}{\sqrt{n!}}\,b^n\varphi_0, \qquad \psi_n=\frac{1}{\sqrt{n!}}\,{a^\dagger}^n\psi_0,
$$
$n\geq0$. They satisfy the eigenvalue equations $H\varphi_n=\omega n\varphi_n$ and $H^\dagger\psi_n=\omega n\psi_n$, for all $n\geq0$. In particular, since $H\varphi_0=H^\dagger\psi_0=0$, $E=0$ in (\ref{44}). With this in mind, the vectors in (\ref{45}) can be rewritten as follows:
$$
\varphi_{0:j,n}=Z_j^n\varphi_0=\left\{
\begin{array}{ll}
a^n\varphi_0=0\hspace{1.9cm}\, \forall n\geq 1, \quad j=1\\
b^n\varphi_0=\sqrt{n!}\,\,\varphi_{n}, \qquad \forall n\geq 0,\quad j=2\\
\end{array}
\right.
$$
and
$$
\psi_{0:j,n}={Z_j^\dagger}^n\psi_0=\left\{
\begin{array}{ll}
{a^\dagger}^n\psi_0=\sqrt{n!}\,\psi_{n}, \qquad \forall n\geq 0, \quad j=1\\
{b^\dagger}^n\psi_0=0, \hspace{1.8cm} \forall n\geq 1,\quad j=2\\
\end{array}
\right.
$$
In particular we have $\varphi_{0:j,0}=\varphi_0$ and $\psi_{0:j,0}=\psi_0$, $j=1,2$. Needless to say, these results are strongly connected with our choice of seed vectors. In fact, using $\varphi_\omega=b\varphi_0$ and $\psi_\omega=a^\dagger\psi_0$ (then $H\varphi_\omega=\omega \varphi_\omega$ and $H^\dagger\psi_\omega=\omega n\psi_\omega$) rather than $\varphi_0$ and $\psi_0$, it is clear, for instance, that $a^n\varphi_\omega=0$ only for $n\geq2$. So we see that the vectors $\varphi_{\omega:j,n}$ are different from the $\varphi_{0:j,n}$ (even if they are clearly related). In other words, it is not essential to start from the ground states of $H$ and $H^\dagger$ to use our ALOs.

Going back to $\varphi_{0:j,n}$ and $\psi_{0:j,n}$, we find
$$
H\varphi_{0:2,n}=\omega n\varphi_{0:2,n}, \qquad H^\dagger\psi_{0:1,n}=\omega n\psi_{0:1,n},
$$
while $\varphi_{0:1,n}=\psi_{0:2,n}=0$ for all $n\geq1$. These eigenvalue equations are in agreement with 
formula (\ref{47}) since we have $\epsilon_{0:2,n}=0+n\lambda_2=\omega n$ and $\epsilon_{0:1,n}=0+n\lambda_1=-\omega n$, so that $\overline{\epsilon_{0:1,-n}}=\omega n$. Formula (\ref{48}) becomes
$$
\langle\psi_{0:1,m},\varphi_{0:2,n}\rangle=n!\delta_{n,m},
$$
while the other scalar products are trivial. It is clear that, since $\lambda_1+\lambda_2=0$, $[H,Z_1Z_2] = [H,Z_2Z_1]=0$ and  $[H^\dagger,Z_1^\dagger Z_2^\dagger] = [H^\dagger,Z_2^\dagger Z_1^\dagger]=0$. Hence $(H,Z_1,Z_2)$ satisfy Condition $N_2$, and equations (\ref{412}) and (\ref{413}) imply that $\varphi_{0:2,n}$ is eigenstate of $Z_1Z_2=ab$ and of $Z_2Z_1=ba$, while $\psi_{0:1,n}$ is eigenstate of $Z_1^\dagger Z_2^\dagger=a^\dagger b^\dagger$ and of $Z_2^\dagger Z_1^\dagger=b^\dagger a^\dagger$. A direct computation produces, for instance,
$$
Z_1Z_2\varphi_{0:2,n}=(n+1)\varphi_{0:2,n}, \qquad Z_2Z_1\varphi_{0:2,n}=n\varphi_{0:2,n}.
$$
These two equalities show what stated right after (\ref{414}): $n+1=z_{0:2,n}(1,2)\neq z_{0:2,n}(2,1)=n$: the order of the $Z_j$ is important.

\subsubsection{Pseudo-bosons in $d=2$}\label{exapbsd2}

The previous example can be extended to a two-dimensional systems without particular difficulties, leaving untouched the algebraic settings, $\LD$. Of course, in this case, $\D=\Sc(\mathbb{R}^2)$, the set of test functions in two-dimensions, or some other useful dense subset of $\Lc^2(\mathbb{R}^2)$.

Let $a_j$ and $b_j$ be again $\D$-pseudo bosonic operators. They satisfy $[a_j,b_k]=\delta_{j,k}\1$, $j,k=1,2$. Let us put $N_j=b_ja_j$, and $H=\omega_1N_1+\omega_2N_2$, with $\omega_1$ and $\omega_2$ real and non zero. It is known, \cite{baginbagbook,bagrev}, that the eigenvalues of $H$ and $H^\dagger$ are the same, $E_{\bf n}=E_{n_1,n_2}=\omega_1n_1+\omega_2n_2$, while their eigenstates can be constructed extending what we have done in the previous, 1-d, example. For this, we start with two non zero vacua, $\varphi_{\bf0}=\varphi_{0,0}$ and $\psi_{\bf 0}= \psi_{0,0}$ in $\D$, satisfying $a_j\varphi_{\bf 0}=b_j^\dagger\psi_{\bf 0}=0$, $j=1,2$. Then we construct two families of vectors, all in $\D$, as follows:
$$
\varphi_{\bf n}=\varphi_{n_1,n_2}=\frac{1}{\sqrt{n_1!\,n_2!}}\,b_1^{n_1}b_2^{n_2}\varphi_{\bf0}, \qquad \psi_{\bf n}=\psi_{n_1,n_2}=\frac{1}{\sqrt{n_1!\,n_2!}}\,{a_1^\dagger}^{n_1}{a_2^\dagger}^{n_2}\psi_{\bf 0},
$$
$n_1,n_2\geq0$, and
$$
H\varphi_{\bf n}=E_{\bf n}\varphi_{\bf n}, \qquad H^\dagger\psi_{\bf n}=E_{\bf n}\psi_{\bf n}.
$$
In particular, $H\varphi_{\bf 0}=H^\dagger\psi_{\bf 0}=0$. These vectors will be the starting point of our construction. 

First of all, we notice that all the eigenvalues of $H$ and $H^\dagger$ have multiplicity one if $\frac{\omega_1}{\omega_2}\notin\mathbb{Q}$. As we have seen, this is a sufficient (but not necessary) condition to ensure that the vectors we will construct later are eigenstates of different operators, at least if Condition $N_0$ holds for some $N_0$. More explicitly, if we put
$$
Z_1=a_1, \quad Z_2=b_1,\quad Z_3=a_2, \quad z_4=b_2, \quad \mbox{ and }\quad \lambda_1=-\lambda_2=-\omega_1, \quad \lambda_3=-\lambda_4=-\omega_2,
$$
it follows that
$$
[H,Z_j]=\lambda_j Z_j,
$$
$j=1,2,3,4$. Now, the vectors in (\ref{45}) are the following:
$$
\varphi_{0:j,n}=Z_j^n\varphi_0=\left\{
\begin{array}{ll}
	a_1^n\varphi_{\bf 0}=0\hspace{2,2cm}\, \forall n\geq 1, \quad j=1\\
	b_1^n\varphi_{\bf 0}=\sqrt{n!}\,\,\varphi_{n,0}, \qquad \forall n\geq 0,\quad j=2\\
	a_2^n\varphi_{\bf 0}=0\hspace{2,2cm}\, \forall n\geq 1, \quad j=3\\
b_2^n\varphi_{\bf 0}=\sqrt{n!}\,\,\varphi_{0,n}, \qquad \forall n\geq 0,\quad j=4,\\
\end{array}
\right.
$$
while
$$
\psi_{0:j,n}={Z_j^\dagger}^n\psi_0=\left\{
\begin{array}{ll}
{a_1^\dagger}^n\psi_{\bf 0}=\sqrt{n!}\,\,\psi_{n,0}\hspace{1.7cm}\, \forall n\geq 1, \quad j=1\\
{b_1^\dagger}^n\psi_{\bf 0}=0, \hspace{3cm} \forall n\geq 0,\quad j=2\\
{a_2^\dagger}^n\psi_{\bf 0}=\sqrt{n!}\,\,\psi_{0,n}\hspace{1.7cm}\, \forall n\geq 1, \quad j=3\\
{b_2^\dagger}^n\psi_{\bf 0}=0, \hspace{3cm} \forall n\geq 0,\quad j=4,\\
\end{array}
\right.
$$
and we see that
$$
H\varphi_{0:2,n}=\omega_1n\varphi_{0:2,n}, \quad H\varphi_{0:4,n}=\omega_2n\varphi_{0:4,n},\quad H^\dagger \psi_{0:1,n}=\omega_1 n\psi_{0:1,n}\quad H^\dagger \psi_{0:3,n}=\omega_2 n\psi_{0:3,n},
$$
that
$$
\omega_{0:2,n}=0+n\lambda_2=n\omega_1, \qquad \omega_{0:4,n}=0+n\lambda_4=n\omega_2, 
$$ and that
$$ \overline{\omega_{0:1,-n}}=\overline{0-n\lambda_2}=n\omega_1, \qquad \overline{\omega_{0:3,-n}}=\overline{0-n\lambda_3}=n\omega_2,
$$
so that formula (\ref{46}) is recovered. As for the biorthogonality of the eigenvectors, we have
$$
\langle\varphi_{0:2,n},\psi_{0:3,m}\rangle=\langle\varphi_{0:4,m},\psi_{0:1,n}\rangle=0,
$$
for all $n,m\geq1$, while
$$
\langle\varphi_{0:2,n},\psi_{0:1,m}\rangle=\langle\varphi_{0:4,n},\psi_{0:3,m}\rangle=n!\,\delta_{n,m}, 
$$
in agreement with (\ref{48}). 

Now, since $\lambda_1+\lambda_2=\lambda_3+\lambda_4=0$, it follows that $Z_1Z_2=a_1b_1=N_1+\1$, $Z_2Z_1=N_1$, $Z_3Z_4=N_2+\1$ and $Z_4Z_3=N_2$, all commute with $H$. It is now clear that $\varphi_{0:j,n}$, $j=2,4$, are eigenstates of all these operators, while the $\psi_{0:j,n}$, $j=1,3$, are eigenstates of their adjoints. As in the previous example, we can explicitly check that, for instance, $n+1=z_{0:2,n}(1,2)\neq z_{0:2,n}(2,1)=n$. Similarly we have $n+1=z_{0:4,n}(3,4)\neq z_{0:4,n}(4,3)=n$. Once again we see that the order of the ALOs is important in the determination of the eigenvalues, while the eigenvectors are always the same.

\vspace{3mm}

It is now interesting to see how this example looks like in some concrete, and more {\em explicit}, situation. We first consider the following manifestly non self-adjoint Hamiltonian
$$
H=\frac{1}{2}(p_1^2+x_1^2)+\frac{1}{2}(p_2^2+x_2^2)+i\left[A(x_1+x_2)+B(p_1+p_2)\right], $$ where $A$ and $B$ are real constants,
while $x_j$ and $p_j$ are the  self-adjoint position and momentum operators, satisfying $[x_j,p_k]=i\delta_{j,k}\1$, \cite{miao,baglat}. Since $H$ is quadratic in its variables, it is reasonable to look for ladder operators which are linear in the $x_j$'s and $p_j$'s. Indeed, if we put $C=-B+iA$ and $D=B+iA$, it is possible to check that
$$
Z_1=\frac{1}{\sqrt{2}}(x_1+ip_1+C), \quad Z_2=\frac{1}{\sqrt{2}}(x_1-ip_1+D), \quad Z_3=\frac{1}{\sqrt{2}}(x_2+ip_2+C), \quad Z_4=\frac{1}{\sqrt{2}}(x_2-ip_2+D) 
$$
satisfy (\ref{41}) with $\lambda_1=\lambda_3=-1$ and  $\lambda_2=\lambda_4=1$. This is easy to understand in terms of pseudo-bosonic operators, since $H$ can be rewritten as in the beginning of Section \ref{exapbsd2}, a part for the vacuum energy: $
H=N_1+N_2+(A^2+B^2+1)\1, 
$
where, as usual, $N_j=b_ja_j$, $[a_j,b_k]=\delta_{j,k}\1$, $a_j=c_j+\frac{C}{\sqrt{2}}$, $b_j=c_j^\dagger+\frac{D}{\sqrt{2}}$, and $c_j=\frac{1}{\sqrt{2}}(x_j+ip_j)$ is the standard bosonic annihilation operator for the $j$-th mode. Of course, for this Hamiltonian, $\omega_1=\omega_2=1$, and not all the eigenvalues have multiplicity one. However, most of the general results deduced before can still be explicitly deduced, and even more. In particular, since $\lambda_1+\lambda_2=\lambda_3+\lambda_4=0$, we observe that $Z_1Z_2$, $Z_2Z_1$, $Z_3Z_4$ and $Z_4Z_3$ commute with $H$ (and their adjoints commute with $H^\dagger$). But since we also have $\lambda_1+\lambda_4=\lambda_3+\lambda_2=0$, also $Z_1Z_4=a_1b_2$ and $Z_2Z_3=b_1a_2$ commute with $H$. Notice that $Z_4Z_1=Z_1Z_4$ and $Z_2Z_3=Z_3Z_2$: in this case the order of these operators is not relevant. Notice also that, when written in terms of the original variables $x_j$ and $p_j$, $Z_1Z_4$ and $Z_2Z_3$ appear as non trivial quadratic operators. The fact that the $\varphi_{0:j,n}$, $j=2,4$, are eigenstates of all these operators $Z_jZ_k$ is easy to check, and will not be done here.

\vspace{2mm}

{\bf Remark:--} The appearance of more operators commuting with $H$ is a  consequence of the existence of degeneracy of the eigenvalues of $H$. Indeed, $Z_2Z_3$ does not commute with $H$ if $\frac{\omega_1}{\omega_2}\notin\mathbb{Q}$, and this is because $\lambda_2+\lambda_3=\omega_1-\omega_3\neq0$, in this case. In other words, the presence of degeneracies  enriches the model.

\vspace{3mm}

We end this section by briefly considering another manifestly non self-adjoint Hamiltonian, again connected to a two-dimensional harmonic oscillator, which can be rewritten in terms of 2-d pseudo-bosonic operators. We consider
$$ H=(p_1^2+x_1^2)+(p_2^2+x_2^2+2ix_2)+2\epsilon x_1x_2,$$ where $\epsilon\in]-1,1[$. This Hamiltonian was proposed and studied in \cite{ben1,miao1,baglat}, where it is shown that, putting
$$
\left\{
\begin{array}{ll}
a_1=\frac{1}{2\sqrt[4]{1+\epsilon\, \xi}}\left((ip_1+\sqrt{1+\epsilon\, \xi}\,x_1)+\xi(ip_2+\sqrt{1+\epsilon\, \xi}\,x_2)+
i\,\frac{\xi}{\sqrt{1+\epsilon\, \xi}}\right),\\
a_2=\frac{1}{2\sqrt[4]{1-\epsilon\, \xi}}\left((ip_1+\sqrt{1-\epsilon\, \xi}\,x_1)-\xi(ip_2+\sqrt{1-\epsilon\, \xi}\,x_2)-
i\,\frac{\xi}{\sqrt{1-\epsilon\, \xi}}\right),\\
b_1=\frac{1}{2\sqrt[4]{1+\epsilon\, \xi}}\left((-ip_1+\sqrt{1+\epsilon\, \xi}\,x_1)+\xi(-ip_2+\sqrt{1+\epsilon\, \xi}\,x_2)+
i\,\frac{\xi}{\sqrt{1+\epsilon\, \xi}}\right),\\
b_2=\frac{1}{2\sqrt[4]{1-\epsilon\, \xi}}\left((-ip_1+\sqrt{1-\epsilon\, \xi}\,x_1)-\xi(-ip_2+\sqrt{1-\epsilon\, \xi}\,x_2)-
i\,\frac{\xi}{\sqrt{1-\epsilon\, \xi}}\right),
\end{array}
\right.
$$
we can rewrite
$$ H=H_1+H_2+\frac{1}{1-\epsilon^2}\,\1,\qquad
H_1=\sqrt{1+\epsilon\, \xi}\,(2N_1+\1), \quad H_2=\sqrt{1-\epsilon\, \xi}\,(2N_2+\1). $$ 
Here $\xi$ can be either $+1$ or $-1$, and $N_j=b_ja_j$, as in the previous example. These are $\D$-PBs, \cite{baglat}, so that, in particular, $[a_j,b_k]=\delta_{j,k}\1$. The eigenvalues of $H$ are the following:
$$
E_{\bf n}=E_{n_1,n_2}=\sqrt{1+\epsilon\, \xi}\,(2n_1+\1)+\sqrt{1-\epsilon\, \xi}\,(2n_2+\1)+\frac{1}{1-\epsilon^2},
$$
and their multiplicity is one if $\epsilon$ is chosen in such a way $\sqrt{\frac{1+\epsilon}{1-\epsilon}}\notin\mathbb{Q}$. In this case the operators $Z_j$ are the following:
$$
Z_1=a_1, \qquad Z_2=b_1,\qquad Z_3=a_2, \qquad z_4=b_2, 
$$
and the corresponding $\lambda_j$ are
$$
\lambda_1=-\lambda_2=-2\sqrt{1+\epsilon\, \xi}, \qquad \lambda_3=-\lambda_4=-2\sqrt{1-\epsilon\, \xi}.
$$
It is clear that now, while $Z_1Z_2$, $Z_2Z_1$, $Z_3Z_4$ and $Z_4Z_3$ still commute with $H$, $Z_1Z_3$ and $Z_2Z_4$ do not. This is a consequence of the lack of degeneracy in $\sigma_p(H)$.

\section{Conclusions}\label{sectconcl}

We have analyzed some general aspects of  ALOs, in presence of Hamiltonians which can be self-adjoint or not. We have  shown that ALOs can be defined for generic factorizable Hamiltonians, and for Hamiltonians costructed in terms of generalized Heisenberg algebra. In all these cases, we have seen that some initial requirements on the Hamiltonian $H_0=H_0^\dagger$ gives us the possibility to construct, in a rather specific way, a family of vectors which are, when non zero, eigenvectors of $H_0$, and to deduce the related eigenvalues. This procedure, which was known and already used by some authors in some specific case, has been generalized to let new cases to fit in. Some applications have been discussed to clarify the general results. What is still missing, and in our opinion deserves a deeper investigation, is the possibility of using the settings proposed in Section \ref{sect4} to deal with the extended generalized Heisenberg algebra considered in \cite{bcg}. This is work in progress.

\section*{Acknowledgements}

The author acknowledges partial support from Palermo University and from G.N.F.M. of the INdAM.

\renewcommand{\theequation}{A.\arabic{equation}}

\section*{Appendix: $O^*$-algebras}\label{appendix}

Let us briefly review how  $\Lc^\dagger(\D)$ can be introduced, and why it is so relevant for us.  We refer to \cite{aitbook,trrev,bagrev2007} for many results on $*$-algebra, { quasi $*$-algebras}, and $O^*$-algebras. In particular, we have:

\begin{defn}\label{o*}Let $\mathcal{H}$ be a separable Hilbert space and $N_0$ an
	unbounded, densely defined, self-adjoint operator. Let $D(N_0^k)$ be
	the domain of the operator $N_0^k$, $k \ge 0$, and $\mathcal{D}$ the domain of
	all the powers of $N_0$, that is,  $$ \mathcal{D} = D^\infty(N_0) = \bigcap_{k\geq 0}
	D(N_0^k). $$ This set is dense in $\mathcal{H}$. We call
	$\mathcal{L}^\dagger(\mathcal{D})$ the $*$-algebra of all \textit{  closable operators}
	defined on $\mathcal{D}$ which, together with their adjoints, map $\mathcal{D}$ into
	itself. Here the adjoint of $X\in\mathcal{L}^\dagger(\mathcal{D})$ is
	$X^\dagger=X^*_{| \mathcal{D}}$. $\mathcal{L}^\dagger(\mathcal{D})$ is called  an $O^*$-algebra.
\end{defn}

In $\mathcal{D}$ the topology is defined by the following $N_0$-depending
seminorms: $$\phi \in \mathcal{D} \rightarrow \|\phi\|_n\equiv \|N_0^n\phi\|,$$
where $n \ge 0$, and  the topology $\tau_0$ in $\mathcal{L}^\dagger(\mathcal{D})$ is introduced by the seminorms
$$ X\in \mathcal{L}^\dagger(\mathcal{D}) \rightarrow \|X\|^{f,k} \equiv
\max\left\{\|f(N_0)XN_0^k\|,\|N_0^kXf(N_0)\|\right\},$$ where
$k \ge 0$ and   $f \in \mathcal{C}$, the set of all the positive,
bounded and continuous functions  on $\mathbb{R}_+$, which are
decreasing faster than any inverse power of $x$:
$\mathcal{L}^\dagger(\mathcal{D})[\tau_0]$ is a {   complete *-algebra}.

The relevant aspect of $\LD$ is that, \cite{aitbook,bagrev2007,trrev}, if $x,y\in \mathcal{L}^\dagger(\mathcal{D})$, we can multiply them and the results, $xy$ and $yx$, both belong to $\mathcal{L}^\dagger(\mathcal{D})$, as well as their difference, the commutator $[x,y]$. Also, powers of $x$ and $y$ all belong to $\Lc^\dagger(\D)$, which is therefore a good framework to work with, also in presence of unbounded operators. In fact, if $N_0=a^\dagger a$, where $[a,a^\dagger]=\1_b$ as in Section \ref{sect2}, we can prove that $a, a^\dagger\in\Lc^\dagger(\D)$. Hence $N_0\in  \Lc^\dagger(\D)$ as well. This is also true for pseudo-bosonic operators, \cite{bagrev,bagrus2018}, at least if $\D=\Sc(\mathbb{R})$.

\end{document}